\documentclass[11pt]{article}
\usepackage[a4paper,margin=25mm]{geometry}
\usepackage{amsmath,amssymb,amsthm,natbib}
\usepackage[hidelinks]{hyperref}
\newtheorem{theorem}{Theorem}
\newtheorem{proposition}{Proposition}
\newcommand{\E}{\mathbb E}
\newcommand{\ip}[2]{\langle #1,#2\rangle}
\newcommand{\Risk}{\operatorname{Risk}}
\title{Unidentified prediction-risk rankings under spatial sampling with fixed inclusion probabilities}
\author{Shinichiro Shirota\\Hitotsubashi University\\\texttt{shinichiro.shirota@r.hit-u.ac.jp}}
\date{}
\begin{document}
\maketitle
\begin{abstract}
For Gaussian spatial prediction, we construct positive sampling mechanisms with identical observation laws and conditional site inclusion probabilities, yet opposite risk rankings for two fixed predictors. The construction selects two of four sites and allows independent Gaussian measurement error.
\end{abstract}
\noindent\textbf{Keywords:} Preferential sampling; Spatial prediction; Nonidentification; Inclusion probability; Sensitivity analysis.

\section{Introduction}
Spatial prediction can depend on how sampling locations are selected \citep{diggle2010}. Whole-design models allow dependence between locations beyond that represented by a sampling intensity \citep{gray2023}. We ask whether correct conditional site inclusion probabilities, together with the complete distribution of the observed data, determine which of two specified predictors has smaller integrated squared-error risk.

We construct an example in which they do not. Two sites are selected from four candidates in a Gaussian field with a fixed, known covariance. Every site has conditional inclusion probability $1/2$, every pair has unconditional probability $1/6$, and the joint law of the selected pair and its measurements is constant along a family of positive mechanisms. Nevertheless, the risk ranking of a baseline posterior mean and a working posterior mean reverses. The construction remains valid with independent Gaussian measurement error of any fixed finite variance. Consequently, repeated independent observations from this experiment cannot identify that ranking, even when the field law is known.

Nonidentification under nonignorable missingness is established in pattern-mixture theory \citep{little1993}; coarsening theory also shows that testability depends on the observation structure \citep{cator2004}. Our contribution is a constructive spatial example satisfying these constraints simultaneously, rather than a new general nonidentification principle. A product of compactly supported derivatives preserves every pairwise observed distribution while changing a conditional field mean. Matching contrasts describe the remaining design freedom. A simpler, distinguishable family then separates reversal of a correction's direction from excessive correction magnitude.

\section{Observation structure and risk}
Let $P$ be a common baseline law for $(T,X,E)$, where $T$ is a square-integrable random element of a real separable Hilbert space and $E$ collects any observation noise. Potential observations are measurable maps $Y_r=f_r(X,E)$, indexed by a finite pattern set $\mathcal R$. After generating these variables, draw $R$ with probabilities $q_r(X)$, independently of $(T,E)$ conditional on $X$. Only $O=(R,Y_R)$ is observed. This latent-state factorization is not the missing-at-random condition of \citet{rubin1976}: $X$ need not be observed, and $q_r$ may depend on its unobserved coordinates.

Fix a working mechanism with $0<m\leq |\mathcal R|q^{\mathrm w}_r(X)\leq M<\infty$ almost surely for every $r$, and define the Bochner conditional means
\[
 I_r=\E[T\mid Y_r],\qquad
 J_r=\frac{\E[Tq^{\mathrm w}_r(X)\mid Y_r]}{\E[q^{\mathrm w}_r(X)\mid Y_r]},\qquad d_r=J_r-I_r.
\]
The denominator is bounded below by $m/|\mathcal R|$. Conditional Jensen gives $\E\|I_r\|^2\leq\E\|T\|^2$ and $\E\|J_r\|^2\leq(M/m)\E\|T\|^2$, so all branch risks below are finite. Write
\[
 \Risk_q(a)=\E_q\|T-a_R(Y_R)\|^2,\quad
 A_q=\E_q\|d_R\|^2,\quad C_q=\E_q\ip{T-I_R}{d_R}.
\]
For the same fixed pair $I,J$ under each mechanism,
\begin{equation}\label{eq:loss}
 \Delta_q:=\Risk_q(I)-\Risk_q(J)=2C_q-A_q,\qquad
 \Risk_q(I+wd)-\Risk_q(I)=w^2A_q-2wC_q.
\end{equation}
Negative $\Delta_q$ means that the working adjustment increases risk.

\section{Identical observations and opposite rankings}
Let $Z$ be a centered Gaussian field on $[0,1]$ with continuous covariance $k$. Take a jointly measurable version, available by mean-square continuity, and set $T=Z(\cdot)\in L^2[0,1]$ with Lebesgue measure. Fubini gives $\E\|T\|^2=\int_0^1 k(v,v)\,dv<\infty$. At distinct sites $S=(a_1,a_2,a_3,a_4)$ let $X=Z(S)$ have positive definite covariance $K$. Set $Y_i=X_i+\epsilon_i$, with independent $N(0,\tau^2)$ errors independent of the field and selection randomization, for fixed $0\leq\tau<\infty$. Use the pair order
\[
 \mathcal R=(12,34,13,24,14,23),\qquad s=(1,1,-1,-1,0,0).
\]
Thus $s_{12}=s_{34}=1$, $s_{13}=s_{24}=-1$, and $s_{14}=s_{23}=0$.

\begin{theorem}\label{thm:equiv}
For every such field, every fixed finite $\tau\geq0$, and every $0<\kappa<1$, there is $h\in C_c^\infty(\mathbb R^4)$ with $|h|\leq1$ such that the mechanisms
\begin{equation}\label{eq:mechanism}
 q^t_r(X)=\{1+t\kappa s_rh(X)\}/6,\qquad -1\leq t\leq1,
\end{equation}
satisfy $q^t_r\geq(1-\kappa)/6$, $\Pr_t(i\in R\mid X)=1/2$, and
\begin{equation}\label{eq:observed}
 P_t(R=r,Y_r\in dy)=P(Y_r\in dy)/6.
\end{equation}
Let $I_r=\E[T\mid Y_r]$ and $J_r=\E_1[T\mid R=r,Y_r]$. For these fixed rules there is $0<A_\tau<\infty$ such that
\begin{equation}\label{eq:ranking}
 \E_t[T\mid R=r,Y_r]=I_r+t d_r,\qquad
 A_{q^t}=A_\tau,\qquad \Delta_{q^t}=(2t-1)A_\tau.
\end{equation}
In particular, $\Delta_{q^{-1}}=-3A_\tau$ and $\Delta_{q^1}=A_\tau$, although the observed-data laws coincide.
\end{theorem}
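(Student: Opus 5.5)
The plan is to build $h$ from its density-weighted form. Let $\varphi$ be the $N(0,K)$ density of $X$. Fix nonnegative, nonzero $g_1,\dots,g_4\in C_c^\infty(\mathbb R)$ and put
\[
h(x)=c\,\frac{g_1'(x_1)\,g_2(x_2)\,g_3(x_3)\,g_4'(x_4)}{\varphi(x)} .
\]
Here $c>0$ is chosen so that $|h|\le 1$; this is possible because $\varphi$ is bounded below on the compact support. Then $h\in C_c^\infty(\mathbb R^4)$.

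The central identity is $\E[h(X)\mid X_i,X_j]=0$ for each pair $ij\in\{12,34,13,24\}$, that is, for every pair with $s_r\ne0$. To see it, integrate $h\varphi$ over the two complementary coordinates. Each of these complements contains coordinate $1$ or coordinate $4$, so a factor $\int g_1'=0$ or $\int g_4'=0$ appears. Conditioning on $Y_r=X_r+\epsilon_r$ is a further average over $X_r$, because $\epsilon$ is independent. Hence $\E[h(X)\mid Y_r]=0$ as well, and for $s_r=0$ the mechanism term vanishes anyway.

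The routine claims then follow.
\begin{itemize}
\item Positivity holds because $|t\kappa s_rh|\le\kappa$.
\item Each site lies in exactly three pairs, and their $s$-values sum to $0$. For example, site $1$ lies in $12,13,14$ with values $1-1+0$. So $\Pr_t(i\in R\mid X)=1/2$.
\item For \eqref{eq:observed}, $P_t(R=r,Y_r\in dy)=\E[q^t_r(X)\mid Y_r=y]\,P(Y_r\in dy)$, and the conditional mean equals $1/6$ by the identity above.
\end{itemize}

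For the conditional mean, write
\[
\E_t[T\mid R=r,Y_r]=\frac{\E[Tq^t_r\mid Y_r]}{\E[q^t_r\mid Y_r]}=I_r+t\kappa s_r\,\E[Th(X)\mid Y_r],
\]
because the denominator is $1/6$. Setting $t=1$ identifies $d_r=\kappa s_r\E[Th(X)\mid Y_r]$, which gives the first part of \eqref{eq:ranking}.

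Since the observed law of $(R,Y_R)$ does not depend on $t$, we get $A_{q^t}=\sum_r\tfrac16\E\|d_r(Y_r)\|^2=:A_\tau$, also free of $t$. For $C_{q^t}$, condition on $(R,Y_R)$. Because $I_R$ and $d_R$ are functions of $(R,Y_R)$, this gives $C_{q^t}=\E_t\ip{td_R}{d_R}=tA_\tau$. Then \eqref{eq:loss} yields $\Delta_{q^t}=(2t-1)A_\tau$.

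Finiteness of $A_\tau$ follows from the conditional Jensen bounds of Section 2. The main obstacle is proving $A_\tau>0$. The difficulty is that $\E[T\mid X]=k(\cdot,S)K^{-1}X$ is linear in $X$, so $h$ must not be orthogonal to linear functions given $X_r$; this is why $h$ uses only two derivatives, rather than all four as in the fully symmetric choice, which kills every linear term.

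Take $r=12$. Writing $c(v)=K^{-1}k(S,v)$, we have
\[
\E[Z(v)h(X)\mid X_1,X_2]=\sum_m c_m(v)\,\E[X_mh(X)\mid X_1,X_2].
\]
The terms behave as follows.
\begin{itemize}
\item The terms $m=1,2$ vanish by the central identity.
\item The term $m=3$ retains the factor $\int g_4'=0$.
\item The term $m=4$ integrates by parts to $-c\,g_1'(x_1)g_2(x_2)\int g_3\int g_4/\varphi_{12}(x_1,x_2)$. This is not identically zero.
\end{itemize}
Since $k(a_4,S)K^{-1}=e_4^\top$, we have $c_4(a_4)=1$, and by continuity $c_4(v)\ne0$ on a set of positive Lebesgue measure. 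Passing from $X_{12}$ to $Y_{12}$ means convolving with a Gaussian kernel, which is injective on this integrable function. Therefore $d_{12}\ne0$ with positive probability, so $A_\tau>0$. The two values $\Delta_{q^{-1}}=-3A_\tau$ and $\Delta_{q^1}=A_\tau$ then read off directly.
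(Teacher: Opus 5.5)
Your proof is correct and follows the paper's argument. It uses the same density-quotient construction $h=cF/\phi_K$ with derivative factors that kill the conditional means, the same affine-in-$t$ posterior and risk bookkeeping, and the same positivity argument at pair $12$ via Gaussian regression and injectivity of Gaussian convolution. The only difference is that you put the derivatives on coordinates $1$ and $4$ rather than $1,2,3$. As a result, $\E[h\mid X_r]=0$ holds only for the four pairs with $s_r\neq0$, which suffices because $s_{14}=s_{23}=0$, and the surviving moment appears in component $4$ rather than component $3$.
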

\begin{proof}
Choose nonzero nonnegative $\psi_i\in C_c^\infty(\mathbb R)$ and write $B_i=\int\psi_i>0$. With $\phi_K$ the Gaussian density, put
\[
 F(x)=\psi'_1(x_1)\psi'_2(x_2)\psi'_3(x_3)\psi_4(x_4),\qquad
 h(x)=cF(x)/\phi_K(x).
\]
The density has a positive minimum on the compact support of $F$, so a sufficiently small $c>0$ makes $|h|\leq1$. This quotient is globally smooth and compactly supported. The asymmetry is essential: every pair leaves at least one of coordinates $1,2,3$ unobserved. Since $\int\psi'_i=0$, integration over that coordinate yields $\E[h\mid X_r]=0$ for every pair, without requiring independent coordinates of $X$. Independence of the measurement errors then gives $\E[h\mid Y_r]=0$.

The incident sums of $s$ vanish and $\sum_r s_r=0$. This proves normalization and the inclusion constraints. Conditional expectation of \eqref{eq:mechanism} given $Y_r$ proves \eqref{eq:observed}, and its posterior denominator equals one after multiplication by six. Thus
\[
 d_r=\kappa s_r\E[Th\mid Y_r],\qquad
 \E_t[T\mid R=r,Y_r]=I_r+td_r.
\]
Under the common observation law, $A_{q^t}=A_\tau$ and $C_{q^t}=tA_\tau$; \eqref{eq:loss} proves the risk formula. Boundedness of $h$ and conditional Jensen give finiteness.

For strict positivity, let $f_{12}$ denote the density of $(Y_1,Y_2)$ and define $b_i=\psi'_i$ if $\tau=0$, or $b_i=\psi'_i*\phi_\tau$ if $\tau>0$, where $\phi_\tau$ is the normal error density. Integration by parts gives
\begin{equation}\label{eq:moment}
 f_{12}(y)\E[Xh\mid Y_{12}=y]
 =-cB_3B_4b_1(y_1)b_2(y_2)e_3.
\end{equation}
Here $e_3$ is the third coordinate vector; the other components contain a vanishing integral of a derivative. Each $b_i$ is continuous and nonzero: for positive $\tau$, Gaussian convolution is injective because its Fourier multiplier never vanishes. Their product is nonzero on an open set of positive observation probability. Gaussian regression gives $\E[Th\mid Y_{12}](v)=H(v)\E[Xh\mid Y_{12}]$, where $H(v)=k(v,S)K^{-1}$ is a row vector. Its third component is continuous and $H_3(a_3)=1$, so $\int_0^1 H_3(v)^2\,dv>0$, using a one-sided neighborhood at an endpoint if needed. Pair 12 therefore contributes strictly positively to $A_\tau$.
\end{proof}

The parameter $\kappa$ specifies the working tilt strength, while $t$ measures the true tilt relative to that working model ($t=1$). Although the sampling probabilities depend on $t\kappa$, varying $t$ holds the working predictor fixed. Equation~\eqref{eq:observed} implies more than failure of inclusion summaries: any binary ranking decision based on any finite number of independent copies of $O$, with independent randomization allowed, has average error $1/2$ at the two endpoints with equal prior weights. Indeed, its probability of choosing $J$ is the same at both endpoints, where opposite choices are correct. Re-estimation from these observations cannot resolve this ambiguity. This assertion concerns the specified experiment and pair of predictors, not arbitrary prediction problems or repeated sampling of one shared field.

\section{Matching contrasts and a simpler path}
The design freedom behind the example has a short algebraic description. For the general setup of Section 2, define
\[
 g_r(x)=\E[2\ip{T-I_r}{d_r}-\|d_r\|^2\mid X=x],\qquad
 A^{\mathrm w}=A_{q^{\mathrm w}}.
\]
Bayes orthogonality under $q^{\mathrm w}$ and \eqref{eq:loss} imply
$\Delta_q=A^{\mathrm w}+\E\sum_r(q_r-q^{\mathrm w}_r)g_r$.
Let $M_1=(12,34)$, $M_2=(13,24)$, $M_3=(14,23)$ and $G_j=\sum_{r\in M_j}g_r$.
\begin{proposition}\label{prop:matching}
Two size-two mechanisms on four candidates have identical conditional site inclusions if and only if their difference, in the stated pair order, is
$u=(v_1,v_1,v_2,v_2,v_3,v_3)$ with $\sum_jv_j=0$ almost surely. Given a normalized working mechanism, feasibility is equivalent to
$v_j(x)\geq-\min_{r\in M_j}q^{\mathrm w}_r(x)$ for every $j$, for $P_X$-almost every $x$, together with this sum constraint. Then
\begin{equation}\label{eq:matching}
 \Delta_q=A^{\mathrm w}+\E\{v_1(G_1-G_3)+v_2(G_2-G_3)\}.
\end{equation}
Unconditional pair frequencies also agree exactly when $\E v_j=0$ for all $j$.
\end{proposition}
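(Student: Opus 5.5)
The plan has three parts: the inclusion-constraint characterization, feasibility, and the risk formula \eqref{eq:matching}. The unconditional-frequency statement comes last.

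\emph{Characterization.} Write $u=q-q'$ pointwise in $x$. Both mechanisms are normalized, so $\sum_r u_r=0$. Equal site inclusions mean $\sum_{r\ni i}u_r=0$ for $i=1,\dots,4$. These five linear constraints on $\mathbb R^6$ have rank four, because the site equations already imply normalization: summing them gives $2\sum_r u_r=0$. The solution space is therefore two-dimensional, and I would compute it directly. From sites $1$ and $2$,
\[
u_{12}+u_{13}+u_{14}=0,\qquad u_{12}+u_{24}+u_{23}=0,
\]
while sites $3$ and $4$ give
\[
u_{34}+u_{13}+u_{23}=0,\qquad u_{34}+u_{14}+u_{24}=0.
\]
Subtracting site $1$ from site $2$ gives $u_{13}+u_{14}=u_{24}+u_{23}$. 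Subtracting site $3$ from site $4$ gives $u_{14}+u_{24}=u_{13}+u_{23}$. Adding these yields $u_{14}=u_{23}$, and subtracting them yields $u_{13}=u_{24}$. Site $1$ minus site $3$ then gives $u_{12}=u_{34}$. Hence $u=(v_1,v_1,v_2,v_2,v_3,v_3)$, and normalization becomes $2\sum_j v_j=0$. Conversely, any such vector satisfies every site equation, since each site lies in exactly one pair of each matching $M_j$. Measurability of the $v_j$ is inherited from $q$ and $q'$, and all statements hold almost surely.

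\emph{Feasibility.} Set $q=q^{\mathrm w}+u$. Normalization and the inclusions already hold, so $q$ is a valid mechanism exactly when $q_r\ge0$ for each $r$. For the two pairs in $M_j$ this reads $v_j\ge -q^{\mathrm w}_r$, which is equivalent to $v_j\ge-\min_{r\in M_j}q^{\mathrm w}_r$. Nothing else is needed: the upper bound $q_r\le1$ follows from nonnegativity and normalization.

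\emph{Risk formula.} I start from the displayed identity $\Delta_q=A^{\mathrm w}+\E\sum_r(q_r-q^{\mathrm w}_r)g_r$ and substitute $u$, which gives $\E\sum_j v_jG_j$. Eliminating $v_3=-v_1-v_2$ produces \eqref{eq:matching}. The one point to verify is integrability, so that the rearrangement is legitimate. The $v_j$ are bounded by $1$. Each $g_r$ is integrable because $\E\|d_r\|^2<\infty$ and $\E\|T-I_r\|^2<\infty$, using the bounds from Section 2 together with Cauchy--Schwarz. If a derivation of the displayed identity is wanted, it goes through $\Delta_q=\E_q[2\ip{T-I_R}{d_R}-\|d_R\|^2]=\E\sum_r q_r g_r$, using conditional independence of $R$ and $(T,E)$ given $X$. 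At $q^{\mathrm w}$ this equals $A^{\mathrm w}$, because $J$ is the Bayes rule there, so $\E_{q^{\mathrm w}}\ip{T-J_R}{d_R}=0$.

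\emph{Unconditional frequencies.} The difference in unconditional pair frequencies is $\E u_r$, which equals $\E v_j$ for $r\in M_j$. These vanish for all pairs if and only if $\E v_j=0$ for every $j$.

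The main obstacle is the characterization step, specifically showing that no other difference vectors satisfy the constraints. The elimination above settles this, and the rank count confirms it.
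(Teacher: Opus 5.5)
Your proposal is correct and follows essentially the same route as the paper: linear elimination on the four incident-edge sums to obtain the opposite-edge equalities, a vertex equation for $\sum_j v_j=0$, nonnegativity of the six probabilities for feasibility, and substitution into the risk contrast. The only difference is that you use different linear combinations of the site equations, and you add the rank count, the integrability check, and the Bayes-orthogonality derivation of $\Delta_q=A^{\mathrm w}+\E\sum_r(q_r-q^{\mathrm w}_r)g_r$, which the paper leaves implicit.
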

\begin{proof}
The four incident-edge sums of $u$ vanish. Adding those for vertices 1 and 2 and subtracting those for 3 and 4 gives $u_{12}=u_{34}$; the other opposite-edge equalities follow similarly. One vertex equation gives $\sum v_j=0$. Conversely these relations preserve every inclusion sum and normalization. The bounds are exactly nonnegativity of the six probabilities. Substitution gives \eqref{eq:matching}.
\end{proof}
\noindent\textit{Remark.} If $|v_j|\leq\rho_j$ for $j=1,2$, with fixed nonnegative $\rho_j$, then $\sum_{j=1}^2\rho_j\E|G_j-G_3|\leq A^{\mathrm w}$ suffices for $\Delta_q\geq0$. This immediate bound is a latent-mechanism sensitivity calculation, not a risk certificate estimable from the observed data.

For comparison, replace $h(X)$ by $\tanh X_3$ in \eqref{eq:mechanism} and observe $Y_r=X_r$ without error. Redefine $I,J$ for this family, taking $q^1$ as working, and then hold them fixed. As shown in the supplement, with $U=A_{q^0}>0$ and $L=A_{q^1}>0$,
\begin{equation}\label{eq:tanh}
 A_{q^t}=U+t(L-U),\qquad C_{q^t}=tL,\qquad
 \Delta_{q^t}=t(U+L)-U.
\end{equation}
Here $U$ is the mean squared adjustment under uniform selection and $L$ its working-law counterpart. Full correction is harmful for $t<t_*=U/(U+L)$. Every positive correction weight is harmful when $t\leq0$, whereas for $0<t<t_*$ small positive weights improve risk even though full correction is harmful. Unlike Theorem~\ref{thm:equiv}, this family generally changes the observation law. For example, pair 34 has joint density proportional to $1+t\kappa\tanh x_3$. Its ambiguity is therefore confined to the unchanged inclusion summaries.

\section{Scope and implications}
Theorem~\ref{thm:equiv} isolates dependence between the selected configuration and the latent field while each individual inclusion indicator remains independent of the field. The finite candidate design is not an intensity-driven log-Gaussian Cox process. It illustrates why specifying site inclusions alone does not specify a whole-design model such as that of \citet{gray2023}; it neither proves that every such model is unidentified nor supplies a practical monitoring protocol using unobserved field values.

Intensity-based adjustment has been studied by \citet{hsiao2026}, including prediction under design misspecification. Our example does not imply that their fitted corrections vanish, or that a particular preferential-sampling test has no power throughout a larger model class. The identical-law argument applies specifically to the family in Theorem~\ref{thm:equiv}. Likewise, the unbiasedness and consistency results of \citet{lu2025} concern regression and covariance estimation under specified spatial point-process assumptions. They address different estimands and assumptions from the integrated risk ranking here.

The construction establishes a strict existence result, not a practically large risk gap. No lower bound uniform in measurement-error variance is asserted. Additional restrictions on the selection mechanism or external validation observations may identify the ranking. The present result identifies the limitation of the stated information, without prescribing which adjustment should be used in an application.

\paragraph{Data and code availability.}
No external empirical data were used. The accompanying supplementary code \texttt{checks.py} and \texttt{results.json} provide exact finite verification examples. These discrete examples do not numerically evaluate the Gaussian construction or establish a practical effect size.
\paragraph{Competing interests.}
The author declares no competing interests relevant to this work.
\paragraph{Funding.}
This work was supported by JSPS KAKENHI Grant Numbers JP24K00244 and JP26K14728.
\paragraph{Supplementary material.}
The supplement gives conditional-moment calculations, the proof of the auxiliary risk path, and reproduction instructions.
\paragraph{AI-assisted preparation.}
During the preparation of this work, the author used generative AI tools, including ChatGPT, Claude, and Codex, to assist with research discussions, mathematical analysis, literature review, computational verification, and manuscript preparation. The author reviewed and edited the AI-assisted material and takes full responsibility for the content of this work.
\begingroup
\raggedright
\bibliographystyle{plainnat}
\bibliography{references}
\endgroup
\clearpage
\appendix
\setcounter{section}{0}\setcounter{equation}{0}\setcounter{proposition}{0}
\renewcommand{\thesection}{S\arabic{section}}
\renewcommand{\theequation}{S\arabic{equation}}
\renewcommand{\theproposition}{S\arabic{proposition}}
\section*{Supplement}

\section{Risk identity and two-candidate contrast}
All conditional expectations in the kernels use the common baseline distribution, before selection. Write $k_r(x)=\E[\langle T-I_r,d_r\rangle\mid X=x]$, $a_r(x)=\E[\|d_r\|^2\mid X=x]$, and $g_r=2k_r-a_r$. These loss kernels are distinct from the field covariance $k(v,w)$ below. For a branch integrand $F_r$,
\[
 \E_q F_R=\E\sum_r q_r(X)\E[F_r\mid X].
\]
Expanding $\|T-I-d\|^2$ gives $\Delta_q=2C_q-A_q$. Under the working law, $\E_{q^{\mathrm w}}[T-J\mid R,Y_R]=0$, hence $C_{q^{\mathrm w}}=A_{q^{\mathrm w}}=\Delta_{q^{\mathrm w}}$. Subtracting that identity from the general one gives the risk contrast used in Section 4 of the main text. This proof uses no Gaussian assumption.

For patterns $(00,10,01,11)$, preservation of both conditional first-order inclusion probabilities and normalization forces
\[
 u=\delta(X)(1,-1,-1,1).
\]
Indeed, $u_{10}+u_{11}=u_{01}+u_{11}=0$, and summing the four components gives $u_{00}=u_{11}$. Feasibility is exactly
\[
 -\min(q^{\mathrm w}_{00},q^{\mathrm w}_{11})
 \leq\delta\leq
 \min(q^{\mathrm w}_{10},q^{\mathrm w}_{01}).
\]
For $G=g_{00}-g_{10}-g_{01}+g_{11}$ and
$H=k_{00}-k_{10}-k_{01}+k_{11}$,
\[
 \Delta_q=A^{\mathrm w}+\E[\delta G],\qquad
 C_q=A^{\mathrm w}+\E[\delta H].
\]
Unconditional pattern frequencies agree exactly when $\E\delta=0$.
Then $\E\delta G=\operatorname{Cov}(\delta,G)$. An additive branch function has zero alternating contrast, so it is invisible to these perturbations. These identities characterize this two-candidate design separately from the fixed-size four-candidate design of the main text.

\section{Gaussian regression and positive correction}
Use the main text's sites $S=(a_1,a_2,a_3,a_4)$ and pair order. Write
$H(v)=k(v,S)K^{-1}$, where $K=\operatorname{Cov}(X)$ and $H(v)$ is a $1\times4$ row vector.
Gaussian regression gives $Z(v)=H(v)X+V(v)$, with centered residual $V$ independent of $X$. The covariance with a bounded tilt can therefore be computed through the candidate vector.

For the auxiliary $\tanh X_3$ family, each incident sum of $s$ is zero and $\E\tanh X_3=0$ by centered Gaussian symmetry. Hence all conditional site inclusions equal $1/2$ and all unconditional pair frequencies equal $1/6$. The likelihood ratios are in $[1-\kappa,1+\kappa]$.
For the pair $12$, put $W=X_3$ and condition on $X_1,X_2$. The working adjustment is
\[
 d_{12}(v)=\frac{\kappa\operatorname{Cov}(Z(v),\tanh W\mid X_1,X_2)}
 {1+\kappa\E[\tanh W\mid X_1,X_2]}.
\]
Its denominator is at least $1-\kappa>0$.
The residual variance $\sigma^2>0$ follows from positive definiteness of $K$. For a normal variable with mean $\mu$ and variance $\sigma^2$, integration by parts gives
\[
 \E[(W-\mu)\tanh W]=\sigma^2\E[\operatorname{sech}^2 W]>0.
\]
Boundary terms vanish because $\tanh$ is bounded and the normal density decays.
Conditional Gaussian regression also gives
\[
 \operatorname{Cov}(Z(v),\tanh W\mid X_1,X_2)
 =\operatorname{Cov}(Z(v),W\mid X_1,X_2)
   \E[\operatorname{sech}^2W\mid X_1,X_2].
\]
Explicitly, with $S_{12}=(a_1,a_2)$,
\[
 \operatorname{Cov}(Z(v),W\mid X_1,X_2)
 =k(v,a_3)-k(v,S_{12})K_{12,12}^{-1}K_{12,3}.
\]
This is continuous in $v$ and equals $\sigma^2$ at $a_3$.
The factor $\E[\operatorname{sech}^2W\mid X_1,X_2]$ is positive. A positive-measure neighborhood, one-sided at an endpoint if necessary, therefore has nonzero correction. Evaluation at a single point alone would not establish an $L^2$ assertion; continuity and Lebesgue measure are essential here.

All branch likelihood ratios relative to uniform selection lie in
$[1-\kappa,1+\kappa]$. Baseline second moments are finite by continuous covariance on a compact interval. Conditional Jensen and these bounds ensure finite $I,J,U,L$. In particular $L\geq(1-\kappa)U>0$ once $U>0$ is established.

For a fixed rule pair, expectations of every integrable branch loss are affine in $t$.
At $t=0$, $I$ is Bayes and $C_0=0$; at $t=1$, $J$ is Bayes and $C_1=L$.
This proves $C_t=tL$ without differentiating a posterior or assuming a fixed posterior normalizing constant.
The full correction threshold lies strictly between zero and one.
Here $A_t=U+t(L-U)\geq(1-\kappa)U>0$ and $t_*=U/(U+L)$. At $t=0$ every nonzero weight is harmful; at $t=t_*$ the two unweighted rules $I,J$ tie.
For $0<t<t_*$ improvement occurs for $0<w<2tL/A_t$.
For $t\leq0$, $w^2A_t-2wtL>0$ for every $w>0$.

\section{Details of the observation-equivalent construction}
This section expands the moment calculations in Theorem 1 of the main text.
The following calculation verifies the main text's construction. It retains the same field law and adds measurement errors $Y_i=X_i+\epsilon_i$, where the errors are mutually independent $N(0,\tau^2)$, independent of the field and selection, for a fixed finite $\tau\geq0$. The rules are redefined for this experiment and then frozen.

To ensure that every selected pair leaves an unobserved derivative coordinate, derivatives are used at coordinates 1, 2, 3 but not 4. Choose nonzero, nonnegative $\psi_i\in C_c^\infty(\mathbb R)$.
Let $\phi_K$ be the positive Gaussian density of $X$, and set
\[
 F(x)=\psi_1'(x_1)\psi_2'(x_2)\psi_3'(x_3)\psi_4(x_4),
 \qquad h(x)=cF(x)/\phi_K(x).
\]
On the compact support of $F$, $\phi_K$ has positive minimum and $F$ has finite maximum absolute value. Choose $c>0$ small enough that $|h|\leq1$. The quotient is smooth everywhere; there is no discontinuous extension at the support boundary. Write $B_i=\int\psi_i>0$.

For every selected pair $r$, at least one of the first three coordinates is unobserved. Since $\int\psi_i'=0$, cancellation of $\phi_K$ and Fubini yield
\[
 \phi_{K,r}(x_r)\E[h\mid X_r=x_r]
   =c\int F(x)\,dx_{-r}=0.
\]
This does not assume that the Gaussian coordinates are independent.
Because the errors are independent of $X$, $\E[h\mid X_r,Y_r]=\E[h\mid X_r]=0$; the tower property therefore gives $\E[h\mid Y_r]=0$. Conditioning on $Y_r$ alone is not a nested conditioning argument without this independence step.

Use the main text's pair order $(12,34,13,24,14,23)$ and $s=(1,1,-1,-1,0,0)$. For $-1\leq t\leq1$ and $0<\kappa<1$, put $q^t_r=(1+t\kappa s_rh)/6$.
Normalization, positivity, size two, and conditional inclusion $1/2$ hold.
Moreover
\[
 P_t(R=r,Y_r\in dy)=P(Y_r\in dy)/6
\]
for every $t$. Define $I_r=\E[T\mid Y_r]$ and
$d_r=\kappa s_r\E[Th\mid Y_r]$. The posterior mean under $q^t$ is exactly $I_r+td_r$, and $J=I+d$ is the working posterior at $t=1$. Thus, under the common observation law,
\begin{equation}
 A=\E\|d_R\|^2,\qquad C_t=tA,\qquad\Delta_t=(2t-1)A. \label{eq:equiv}
\end{equation}
Bounded $h$ and conditional Jensen prove finiteness.

To prove strict positivity, first take $\tau=0$. Direct integration by parts at pair 12 gives
\[
 \phi_{K,12}(x_1,x_2)\E[Xh\mid X_1=x_1,X_2=x_2]
 =-cB_3B_4\psi_1'(x_1)\psi_2'(x_2)e_3.
\]
Components 1 and 2 contain $\int\psi_3'=0$; component 4 also contains that factor. Component 3 uses $\int x\psi_3'(x)\,dx=-B_3$.
The two nonzero derivatives are continuous, so the right side is nonzero on a nonempty open set.

For $\tau>0$, let $\phi_\tau$ denote the $N(0,\tau^2)$ density and put $b_i=\psi_i'*\phi_\tau$. With $f_{12}$ the everywhere positive density of $(Y_1,Y_2)$, direct integration gives
\[
 f_{12}(y)\E[Xh\mid Y_{12}=y]
 =-cB_3B_4b_1(y_1)b_2(y_2)e_3.
\]
Gaussian convolution is injective: its Fourier multiplier
$\exp(-\tau^2\omega^2/2)$ never vanishes. Each $b_i$ is thus nonzero and continuous, and their product is nonzero on an open set.
Regression transfers the moment to
$H_3(v)=[k(v,S)K^{-1}]_3$. Since $H_3(a_3)=1$ and $H_3$ is continuous, its squared integral is positive.
The residual $V$ is independent of $X$ and the measurement error and centered, so its contribution vanishes. Hence $A>0$ for every fixed finite $\tau$.
No lower bound uniform in $\tau$ is asserted.

At $t=-1$ and $t=1$, respectively, the risk differences in \eqref{eq:equiv} are $-3A$ and $A$, although the observation laws coincide. These are differences of nonnegative risks, not negative risks. Any binary choice based only on any number of independent observation replicates has average error $1/2$ under equal endpoint prior probabilities. The same argument allows an auxiliary randomization independent of the data. It is a standard two-point consequence. A procedure allowed to abstain can avoid false declarations by abstaining; it cannot thereby produce uniformly correct rankings.

\section{Verification and reproduction}
The accompanying \texttt{checks.py} enumerates a finite sign-valued analogue using rational arithmetic. It separately sums actual squared losses and verifies the affine path, inclusion constraints, matching contrasts, posterior means under observation equivalence, and two-endpoint ranking errors. All six patterns are included. This deterministic check uses no random seed and is an implementation check, not a proof of the Gaussian results or an estimate of their effect size.

Run \texttt{python3 checks.py} and compare the resulting \texttt{results.json} with the supplied output, disregarding the Python version field if different. The only dependencies are Python's standard library. The source package includes the code and output in its \texttt{anc/} directory. Compile the main manuscript with BibTeX for its references and compile this supplement separately, or build the integrated \texttt{main.tex} in the arXiv source package. The accompanying README gives the tested build command.
\end{document}